\def\ps@headings{%
\def\@oddhead{\mbox{}\scriptsize\rightmark \hfil \thepage}%
\def\@evenhead{\scriptsize\thepage \hfil \leftmark\mbox{}}%
\def\@oddfoot{}%
\def\@evenfoot{}}
\makeatother \pagestyle{headings}
\def\bi{\begin{itemize}}
\def\ei{\end{itemize}}
\def\bequ{\begin{equation}}
\def\eequ{\end{equation}}
\def\benum{\begin{enumerate}}
\def\eenum{\end{enumerate}}
\begin{document}

\title{Primer and Recent Developments on Fountain Codes}

%\author{Jalaluddin Qureshi, Chuan Heng Foh and Jianfei Cai  \\
%School of Computer Engineering\\
%Nanyang Technological University, Singapore}

\author{\IEEEauthorblockN{Jalaluddin Qureshi\IEEEauthorrefmark{1}, Chuan Heng Foh\IEEEauthorrefmark{2}, Jianfei Cai\IEEEauthorrefmark{1}}
\IEEEauthorblockA{\IEEEauthorrefmark{1}
School of Computer Engineering\\
Nanyang Technological University, Singapore}%
\IEEEauthorblockA{\IEEEauthorrefmark{2}
Centre for Communication Systems Research \\
Department of Electronic Engineering\\
University of Surrey\\
Surrey GU2 7XH, United Kingdom} }

\maketitle

\begin{abstract}
In this paper we survey the various erasure codes which had been proposed and patented recently, and along the survey we provide introductory tutorial on many of the essential concepts and readings in erasure and Fountain codes. Packet erasure is a fundamental characteristic inherent in data storage and data transmission system. Traditionally replication/ retransmission based techniques had been employed to deal with packet erasures in such systems. While the Reed-Solomon (RS) erasure codes had been known for quite some time to improve system reliability and reduce data redundancy, the high decoding computation cost of RS codes has offset wider implementation of RS codes. However recent exponential growth in data traffic and demand for larger data storage capacity has simmered interest in erasure codes. Recent results have shown promising results to address the decoding computation complexity and redundancy tradeoff inherent in erasure codes.
\end{abstract}

\let\thefootnote\relax\footnote{This paper appears in BSP Recent Patents on Telecommunications.}

\newtheorem{theorem}{\textbf{Theorem}}
\newtheorem{axiom}{\textbf{Axiom}}
\newtheorem{lemma}{\textbf{Lemma}}

\section{Introduction}\label{sec:introduction}

The aim of this paper is to provide a self-contained introduction about the erasure coding scheme and recent developments in Fountain codes, which are currently the predominant class of erasure codes.

Packet erasure is one the fundamental and inevitable characteristic in data transmission and data storage system. For example routers may drop a packet due to congestion. Similarly a file in a data storage system can be erased due to component failures. The problem of packet erasure exuberates for data transmission on wireless channel due to the shared medium of transmission resulting in packet collisions. In addition to packet collision, for wireless channel, packet may also be erased due to channel fading, additive white Gaussian noise (AWGN) and signal attenuation. The average wireless channel erasure rate in some deployments can be as high as 20-50\%~\cite{Aguayo04,Rozner07}.

Traditional approach of dealing with packet erasure is to use replication and retransmission. The method of replication and retransmission introduces control overhead. For data storage system, replication provides limited reliability. For instance in the event that the original file and the replicated files are both erased, then the data storage system can not recover the original file. Similarly the use of retransmission technique for data transmission system is dependent on packet acknowledgement control frame from the client. It is also possible that an acknowledgement frame can also be erased due to the same reasons as the original data packet, erroneously resulting in the retransmission of those data packets which the client has already received.

For wireless networks, the transmission of acknowledgement frame occupies the wireless channel medium and therefore adversely affects the transmission bandwidth. The problem of collecting acknowledgement exacerbates when the transmitter is multicasting the data to $N$ clients, in which case it has to collect $N$ acknowledgement frames. Due to the exacerbation to efficiently collect ACK frames for multicast transmission, and efficiently retransmit the erased packets, legacy IEEE 802.11 multicast transmission is a ``no-ACK, no-retransmission'' scheme, in which the access point (AP) transmits the data packet and then waits for the channel to be free, conforming only to the carrier sense multiple access collision avoidance - (CSMA/CA) access procedure, before transmitting the next data packet.

The urgency to have a reliable multicast transmission scheme for 802.11 wireless networks is reflected by the recent formation of the IEEE working group, called Task Group aa (TGaa)~\cite{Maraslis12}. The wireless multicast technology is also employed by Multimedia Broadcast Multicast Service (MBMS) for 3GPP cellular network (3GPP TS 26.346).

Erasure codes have been proposed as an efficient remedy to improve the reliability and scalability of data transmission over erasure channels. In an erasure codes transmission scheme the client can recover the $k$ data packets from the $n$ transmitted packets, where $n=(1+\epsilon)k$, at a \textit{rate} $r$ given as $n/k$, and $\epsilon$ is known as the \textit{overhead} (or \textit{redundancy}) of the coding scheme. For recovering the input packets, it is irrelevant which packets the client had received, as long as it has received any $k$ linearly independent packets, the client can decode the $k$ input packets. In erasure coding, the coded packets are generated by linearly mapping the packets with coefficients from a finite field $\mathbb{F}_q$, where for computer science applications the \textit{finite field size} $q$ is given as $q=2^{i}$, $i\in \mathbb{N}_1$, and $\mathbb{N}_1$ is the set of natural numbers excluding zero. Erasure codes, such as Triangular codes, may also be non-linear, and therefore may involve operations over real field.

Erasure coding is a technique to provide reliability in an event of packet erasure. Error coding and erasure coding are class of \textit{Forward Error Correction} (FEC) techniques. \textit{Error coding} scheme protects the system from data corruption, e.g. noise or attenuation, whereas \textit{erasure coding} protect the system from data lost during transmission, e.g. packet drop at router due to congestion or packet collision over the wireless channel. In our paper we only consider a \textit{binary erasure channel} (BEC) model. In a BEC a transmitted packet is either correctly received with probability $p$, or not received with probability $1-p$. Similarly a stored data packet is either not corrupted with probability $p$, or corrupted with probability $1-p$. The BEC is also known as Bernoulli channel model. When considering data transmission to multiple clients, i.e. multicast or broadcast, the packet erasure probability on each of the transmission channel is assumed to \textit{independent and identically distributed} (iid).

To illustrate how erasure coding can be improve the reliability of a transmission system, consider for example a wireless network where an AP is multicasting packets $c_1$ and $c_2$ to clients $R_1$ and $R_2$. However, $R_1$ receives $c_1$ but not $c_2$, whereas $R_2$ receives $c_2$ but not $c_1$. In this case, rather than retransmitting packet $c_1$ and $c_2$ in two different time slots, it is possible for the transmitter to encode the packets $c_1\oplus c_2$ over $\mathbb{F}_2$, and transmit the encoded packet in one time slot. On receiving the encoded packet both the client can recover the lost packet by decoding the original packet with the encoded packet. This therefore reduces the number of retransmissions from two rounds to one, and hence improves the network bandwidth.

Similarly erasure coding can also improve the reliability performance in data storage system. Consider for example packet $c_1$ and $c_2$ being stored on a data storage system. To improve the reliability of the system, traditional approach would replicate $c_1$ and $c_2$, and store two copies of $c_1$ and two copies of $c_2$. However in an event of storage failure, where two copies are erased, and both these erased copies happen to be $c_1$. In this case there is no way the system can recover packet $c_1$. On the contrary, if instead of storing the replicated packet, the system stores two coded packets $c_1+c_2$ and $c_1+\alpha c_2$, then in an event of storage failure, where two packets are erased, then irrespective of which these two erased packets are, the system can still recover $c_1$ and $c_2$ from the remaining two packets.

Such reliability gains however do not come without tradeoffs. While linear coding schemes over larger field size such as the random linear (RL) codes and Reed-Solomon (RS) codes can deliver optimal rate\footnote{Strictly speaking, RL codes are suboptimal, however the difference between optimal rate and the rate of RL codes over large field size $\mathbb{F}_{q\geq256}$ is negligible, and in order of $\approx$10$^{-4}$.}, decoding RS and RL coded packets are costly, requiring the use of matrix inversion which is implemented using Gaussian elimination with complexity $\mathcal {O}(k^3)$ when the coding coefficients are dense, or variants of the Wiedemann algorithm with complexity $\mathcal {O}(k^2\log k)$ when the coding coefficients are sparse~\cite{Vasilenko06}, in addition to the cost of matrix multiplication.

It has been further shown that the decoding computation cost is also dependent on the field size from which the coding coefficients are selected. Practical implementation of RL codes on iPhone 3G has shown that the decoding throughput of RL codes over $\mathbb{F}_2$ is approximately six times faster than decoding RL codes over $\mathbb{F}_{256}$ on the same testbed. Similarly encoding over $\mathbb{F}_2$ is approximately eight times faster compared to encoding over $\mathbb{F}_{256}$~\cite{Vingelmann10}. Unfortunately smaller field size can not be used for RL codes, as larger field size is a prerequisite for RL codes to deliver optimal rate.

Experimental evaluation of RL codes over $\mathbb{F}_{256}$ on iPhone 3G, for $k=64$ with packet length of 4096 bytes, has shown that for two devices with same configurations and running the same applications, the device running with an additional RL decoding application consumes approximately $20\%$ more battery energy reserves~\cite{Shojania09}. Mobile phone batteries suffer from severe energy limitation, which is why handset vendors are increasingly interested in energy optimization of various smartphone applications which can sustain longer operational time.

The high decoding cost of packets coded over large field size can be addressed by using the simpler XOR addition for encoding and decoding, which is also known as $\mathbb{F}_2$ addition. It has been shown that XOR addition of two packets, each 1000 bytes long only consumes 191 nJ of energy~\cite{Vingelmann09}. Given that the transmission of a packet of the same length over IEEE 802.11 network on Nokia N95 consumes 2.31 mJ of energy~\cite{Vingelmann09}, the overall energy cost of XOR addition has no apparent affect on the operational time of a mobile phone.

It is apparent from the above discussion, that an amicable solution is needed to address the throughput performance and decoding computation cost tradeoff inherent in erasure coding. In the last decade a subclass of erasure codes, known as Fountain codes have gained widespread acceptance for its ability to address this tradeoff. However Fountain codes assume very large input packet length to deliver near-optimal transmission rate using linear decoding algorithm. For example, even for very large input packet length of $k\approx 10,000$, LT codes, which is an implementation of Fountain codes,  have an overhead of about 5\%~\cite{Mackay05}. Similarly for $k=65,536$, Raptor codes, which is also an implementation of Fountain codes, have an overhead of about 3.8\%~\cite[Table 1]{Amin06}.

More recently, Qureshi \textit{et al.} proposed the Triangular coding scheme~\cite{Qureshi12} to address the tradeoffs in performance, computation costs and packet length in erasure codes. An ideal erasure code should be able to deliver near-optimal transmission rate with linear computation cost, with such performance being independent of any parameter including input packet length. As we will show, the Triangular codes have the potential to be designed to fulfill the characteristics of such ideal codes.

The rest of the paper is organized as follow. We first present a tutorial on erasure codes in Section~\ref{sec:background}, and the characteristics of classical erasure coding schemes, the Reed-Solomon codes, low density parity check codes, and random linear codes in Section~\ref{sec:classical}. We then present the highlight and performance of Fountain codes, namely Tornado codes, Luby-Transform codes, Raptor codes and standardized Raptor codes in Section~\ref{sec:fountain}, and those of Triangular codes in Section~\ref{sec:triangular}. We then conclude with open research problems in erasure coding in Section~\ref{sec:future} and summary of our work in Section~\ref{sec:conclusion}.

\section{Technical Background}\label{sec:background}
We provide a brief mathematical background on the encoding and decoding for linear codes that covers Fountain codes. Let there be $k$ numbers of input data packets to be encoded for transmission. The set of $k$ input data packets is given by the vector $\textbf{M}=[\textbf{c}_1, \textbf{c}_2, ..., \textbf{c}_k]$, and the set of encoded packets transmitted by the server is given by the vector $\textbf{X}=[\textbf{x}_1, \textbf{x}_2, ..., \textbf{x}_j]$. If $\textbf{M}\subseteq\textbf{X}$, then such a coding scheme is known as \textit{systematic codes}, and \textit{non-systematic codes} otherwise. The set of innovative packets which the client has received is given as $\textbf{Y}_{i}$, $\textbf{Y}_{i}\subseteq \textbf{X}$. We define a packet as innovative packet, if the packet is linearly independent with respect to all the packets that a client already has.

\subsection{Linear Scalar Codes} \label{subsect:linear}
A \textit{linear} code $\mathcal{C}$ of length $B$ over the finite field $\mathbb{F}_q$ is formally defined as a linear subspace of the vector space $\mathbb{F}_q^B$, and \textit{non-linear} otherwise. The code $\mathcal{C}$, $\mathcal{C}\subseteq \mathbb{F}_q^B$, is a linear subspace of $\mathbb{F}_q^B$, if for all $\textbf{x}_i, \textbf{x}_j\in \mathcal{C}$, $\textbf{x}_i\oplus \textbf{x}_j\in \mathcal{C}$ and for all $\textbf{g}_m\in \mathbb{F}_q$, $\textbf{x}_i\in \mathcal{C}$, $\textbf{g}_m\cdot \textbf{x}_i\in \mathcal{C}$. Therefore the subset code $\mathcal{D}=\{111, 010\}$ of $\mathbb{F}_2^3$ is not a linear code, as $111\oplus 010=101\notin \mathcal{D}$.

In this paper, we only consider scalar erasure codes. In \textit{scalar coding}, a packet can not be split into smaller sub-packets, whereas in \textit{vector coding}, a packet may be split into smaller sub-packets. Vector codes are shown to outperform scalar codes in certain classes of transmission problem using erasure codes such as the cooperative data exchange problem~\cite{Tajbakhsh11}.

\subsection{Packet Encoding} \label{subsect:encoding}
To encode a packet, the server chooses coefficients vectors from a finite field $\mathbb{F}_q$, to form an encoding coefficient vector $\textbf{G}_j=[\textbf{g}_1, \textbf{g}_2, ...,\textbf{g}_k]$, $\textbf{g}_k\in \mathbb{F}_q$, which is then multiplied with vector $\textbf{M}$ to generate a single coded packet $\textbf{x}_j$ given as,

\begin{equation}\label{eq:coding}
\textbf{x}_j = \textbf{G}_j \textbf{M}^T.
\end{equation}

Coding over field size $\mathbb{F}_{q>2}$, requires finite field multiplication and XOR addition operations. Whereas for XOR coding, $\textbf{g}_k\in \{0,1\}$, only the addition operation is required. For the client to be able to decode the coded packets, the transmitter needs to add information about the vector $\textbf{G}_j$ in the packet header. The number of bits required to store one such coefficient $\textbf{g}_k$ is given as $q$ bits. Since there are $k$ such coefficients, the packet overhead for a coded packet is given as $k\times q$ bits.

\subsection{Packet Decoding} \label{subsect:decoding}

After a client $R_i$ has received $k$ innovative packets, these $k$ packets are placed in matrix $\textbf{Y}_i$. The coding coefficients from all the coded packet's header is used to form a $k\times k$ coefficient matrix $\textbf{H}_i$. The set of original packets $\textbf{k}$ can then be decoded by $R_i$ as,

\begin{equation}\label{eq:decoding}
\textbf{M}^T = \textbf{H}_i^{-1} \textbf{Y}_i^{T} .
\end{equation}

Inversion of $\textbf{H}_i$ can be performed using Gaussian elimination. We illustrate the above decoding process with the aid of simple example. Consider the three received coded packets given as, $x_1=c_1\oplus c_2$, $x_2=c_2\oplus c_3$ and $x_3=c_1\oplus c_2\oplus c_3$. The corresponding $\mathbb{F}_2$ matrix $\textbf{H}_i$, and its inverse, are given as,

\[
\textbf{H}_i = 
\begin{pmatrix}
1 & 1 & 0\\
0 & 1 & 1\\
1 & 1 & 1
\end{pmatrix},
\textbf{H}_i^{-1} = 
\begin{pmatrix}
0 & 1 & 1\\
1 & 1 & 1\\
1 & 0 & 1
\end{pmatrix}.
\]

\noindent One can verify that based on the inverted matrix, the decoded packets are given as, $c_1=x_2\oplus x_3$, $c_2=x_1\oplus x_2\oplus x_3$, and $c_3=x_1\oplus x_3$.

Coefficient matrix $\textbf{H}_i$ for XOR coded packets will form a $\mathbb{F}_2$ matrix, and only require the ``row interchange'' and ``row addition'' operations~\cite{Fraleigh87} for Gaussian elimination. After the inversion of the binary matrix, only the addition operation needs to be performed. However for packets coded over larger field size, Gaussian elimination process would also need to perform ``row scaling'' operation (i.e. multiplying a row of the matrix with a non-zero scalar) in addition to the ``row interchange'' and ``row addition'' operations. After the inversion, the client will need to perform multiplication and addition to decode the native packets. Therefore even though the complexity order of matrix inversion for both $\mathbb{F}_2$ matrix and $\mathbb{F}_{q>2}$ is same, decoding XOR coded packets requires fewer computation steps.

\subsection{Decoding Algorithms}\label{sec:decoding}
All linear erasure codes essentially use the Gaussian elimination or one of its variants to perform decoding. Gaussian elimination consists of two major steps, \textit{triangularization} of the matrix into an upper or lower Triangular matrix, with complexity $\mathcal {O}(k^3)$ for a $k\times k$ full rank matrix, and \textit{back-substitution} of the triangular matrix to solve the unknown variables with complexity $\mathcal {O}(k^2)$~\cite{Fraleigh87}. The belief-propagation (BP) decoding algorithm used for LT codes, and the inactivation decoding algorithm used for Raptor codes~\cite[pp. 247-255]{Raptor}, can therefore be entirely described using Gaussian
elimination steps of triangularization and back-substitution. We refer interested readers to~\cite[Chapter 11]{Vasilenko06} and the references therein for discussion on various methods to solve a system of linear equations.

When the Triangular matrix has an average sparsity of $\omega$, then the complexity of back-substitution is given as $\mathcal {O}(k\omega)$. It is interesting to note that the average sparsity $\omega$, of a Triangular matrix is bounded as $\frac{k+1}{2}$, which explains the derivation of total number of computation steps for back substitution as $\frac{k^2}{2}+\frac{k}{2}$ in~\cite{Fraleigh87}, i.e. $k$ times $\omega$. All efficient erasure coding scheme are essentially build on this concept of generating sparse coding coefficients, i.e. $\omega<\frac{k+1}{2}$, which can be solved using back-substitution. For LT and Raptor codes, $\omega$ is given by $\log(\frac{k}{\delta})$ and $\log(\frac{1}{\epsilon})$ respectively.

The work of Darmohray and Brooks~\cite{Darmohray87} published in 1987 has shown the performance merits of Gauss-Jordan elimination over Gaussian elimination in multi-processors system despite the former higher computation cost. Gauss-Jordan elimination method in a multi-processor system still continues to be of significant implementation importance in recent works~\cite{Park10}. While the lower computational complexity of Wiedemann algorithm to solve sparse system of linear equations has motivated the design of sparse coding vector~\cite{Sung11}, over $\mathbb{F}_{q>2}$.

\subsection{Tanner Graph}
Many of the characteristics of Fountain codes are illustrated using the Tanner graph, which is a special class of the bipartite graph. In a Tanner graph, the input packets and coded packets are arranged into two disjoint vertex sets, i.e. there can only exist an edge between vertices represented by an input packet and coded packet. The set of packets used to encode the coded packet is given by the edges connecting the coded packet, as each of these edges will be connected to an input packet. The Tanner graph can be treated as a pictorial representation of a $\mathbb{F}_2$ matrix $\textbf{H}_i$. An example of a Tanner graph is given in Figure~\ref{fig:tanner}.

A Tanner graph, and consecutively the codes that the Tanner graph represents, is said to be a regular graph (/regular code) if each of the coded packet is generated using a fixed number of input packets, and irregular graph (/irregular code) otherwise. 

\begin{figure}
\begin{center}
\includegraphics[width = 0.5\textwidth]{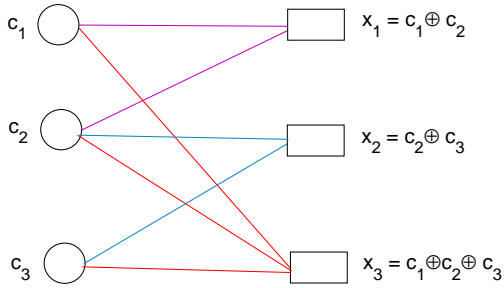}
\end{center}
\caption{An example of an irregular Tanner graph representing the irregular XOR codes given in Section~\ref{subsect:decoding}.} \label{fig:tanner}
\end{figure}

\subsection{Finite Field Arithmetic}
Packet encoding and decoding represented by Equation~\eqref{eq:coding}-\eqref{eq:decoding} are done using finite field arithmetic, which is different from the real field arithmetic. Fundamentally finite field arithmetic stipulates that for encoding and decoding over a field size $\mathbb{F}_q$, the operands are given as $\mathbb{F}_q=\{0,1,\ldots,q-1\}$, and the solution when these operands are added, subtracted, multiplied or divided, is also an element of the field $\mathbb{F}_q$. Because of this, multiplication in Equation~\eqref{eq:coding} does not increases the length of packet, which would have otherwise been observed for real field multiplication.

The monograph by Vasilenko, translated by Martsinkovsky~\cite{Vasilenko06}, covers many of the number-thoeretic algorithms over finite field in an easy-to-understand writing style.

\subsection{Related Problem}\label{subsect:related}
The erasure coding model assumes a single server broadcasting data to multiple clients over BEC, where all the clients want all the data being broadcast by the server. A coding scheme related to erasure code is the index code to solve the index coding problem~\cite{Chaudhry08}.

The \emph{index coding problem} is an instance of packet transmission problem to a set of $N$ clients $R = \{r_1, r_2, ..., r_N\}$ by the server having a set of $k$ packets $P = \{c_1, c_2, ..., c_k\}$, and the \emph{side information} about the set of packets each client $r_i\in R$ has, $H_i\subset P$, and the set of symbol each client wants, $W_i\subseteq P \backslash H_i$, such that the total number of symbols transmitted is minimized.

The erasure codes can therefore be treated as a special case of index codes, where $W_i=P$, $\forall i$, at the start of the transmission. Finding optimal index codes and approximation to such optimal index codes are both NP-hard for the general index coding problem.

Another related problem is that of \textit{network coding}, where multiple sources are multicasting data to multiple common clients connected to the sources through intermediate routers, with BEC model. In network coding, apart from the sources, intermediate routers can also perform encoding before forwarding the coded packets. For a self-contained introductory tutorial on network coding we refer readers to~\cite{Chou07}.

\section{Classical Erasure Codes}\label{sec:classical}
\subsection{Reed-Solomon Codes}
Some of the first class of erasure codes were based on the Reed-Solomon (RS) error coding scheme proposed by Reed and Solomon in 1960 as an error coding~\cite{Reed60} scheme, and presented as an erasure coding scheme by Rizzo in 1997~\cite{Rizzo97}. The RS codes have also been suggested for use in hybrid Automatic Repeat reQuest (ARQ) protocol~\cite[Chap 7]{Wicker94} to improve transmission reliability in wireless network.

RS code are linear codes, using coding coefficients $\textbf{G}_j$ from the Vandermonde matrix, $M_n$ where the rows are given by a geometric progression sequence. Any $k$ arbitrary rows from a $n\times k$ Vandermonde matrix, given as $M'_k$, form a nonsingular matrix, provided that the common ratio of each of the geometric progression sequence is unique. The client can decode the input packets once it has received any $k$ coded packets from $n$ transmissions, where $n\geq k$.

A Vandermonde matrix is given as follow,
\begin{align*}
M_n=
\begin{pmatrix}
1 & a_1 & a_1^2 & \ldots & a_1^{k-2} & a_1^{k-1}\\
1 & a_2 & a_2^2 & \ldots & a_2^{k-2} & a_2^{k-1}\\
  & \vdots &    & \ddots &           & \vdots\\
1 & a_{n} & a_{n}^2 & \ldots & a_{n}^{k-2} & a_{n}^{k-1}\\
\end{pmatrix},
\end{align*}

\begin{theorem}\label{theorem:1}
The matrix $M'_k$ is invertible if each of the number $a_\ell$ is unique.
\end{theorem}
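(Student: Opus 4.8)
The plan is to identify $M'_k$ as a $k\times k$ Vandermonde matrix built from the $k$ chosen nodes $a_1,\dots,a_k$, and to reduce the claim to the classical determinant identity
\[
\det M'_k \;=\; \prod_{1\le i<j\le k}(a_j-a_i),
\]
since an $\mathbb{F}_q$-matrix is invertible exactly when its determinant is a nonzero element of $\mathbb{F}_q$. First I would establish this identity by induction on $k$; the base case $k=1$, where $\det M'_1=1$, is immediate.

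For the inductive step I would use elementary column operations, which are valid over any commutative ring and in particular over $\mathbb{F}_q$: working from the rightmost column leftward, replace column $\ell$ by (column $\ell$) $-\,a_1\cdot$(column $\ell-1$) for $\ell=k,k-1,\dots,2$. This leaves the first column unchanged and turns the first row into $(1,0,0,\dots,0)$, so expanding along the first row leaves a $(k-1)\times(k-1)$ determinant whose $(i,\ell)$ entry is $a_{i+1}^{\ell-1}-a_1a_{i+1}^{\ell-2}=a_{i+1}^{\ell-2}(a_{i+1}-a_1)$. Factoring $(a_{i+1}-a_1)$ out of the $i$-th row gives $\bigl(\prod_{i=2}^{k}(a_i-a_1)\bigr)$ times a $(k-1)\times(k-1)$ Vandermonde matrix in the nodes $a_2,\dots,a_k$, and the induction hypothesis finishes the computation. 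An equivalent route is the polynomial argument: regard $\det M'_k$ as a polynomial of degree $k-1$ in $a_k$; it vanishes whenever $a_k$ equals any earlier node, hence is divisible by $\prod_{i<k}(a_k-a_i)$, and comparing leading coefficients against the $(k-1)$-node case by induction pins the constant down as $1$.

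With the product formula in hand the conclusion is short: by hypothesis the $a_\ell$ are pairwise distinct, so each factor $a_j-a_i$ with $i<j$ is a nonzero element of $\mathbb{F}_q$; since a finite field is an integral domain, the product of these nonzero factors is again nonzero, whence $\det M'_k\ne 0$ and $M'_k$ is invertible.

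I expect the main obstacle to be bookkeeping rather than anything conceptual: keeping the index shifts in the column operations and cofactor expansion straight, and — more importantly — making sure every step is phrased ring-theoretically so that it stays valid over $\mathbb{F}_q$ rather than tacitly importing facts from $\mathbb{R}$ or $\mathbb{C}$. The only place the field structure is genuinely used is the very last line, where the absence of zero divisors converts ``all factors nonzero'' into ``product nonzero''; this is also precisely why the distinctness hypothesis cannot be dropped, since two equal nodes would repeat a row of $M'_k$ and force the determinant to vanish.
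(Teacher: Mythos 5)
Your proof is correct, and it is self-contained where the paper is not: the paper's ``proof'' of Theorem~\ref{theorem:1} consists solely of a pointer to Mirsky's linear algebra text (pp.~17--18), which is where the classical Vandermonde determinant identity $\det M'_k=\prod_{1\le i<j\le k}(a_j-a_i)$ is established. Your argument reproduces exactly that classical result --- the inductive column-operation reduction (or equivalently the polynomial-root argument you sketch as an alternative) is the standard derivation one would find in such a reference --- and then draws the same conclusion: distinct nodes make every factor nonzero, and since $\mathbb{F}_q$ has no zero divisors the determinant is nonzero, so $M'_k$ is invertible. So you have not taken a genuinely different mathematical route; you have filled in the argument the paper outsources. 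Your care in noting that the row/column operations and cofactor expansion are valid over any commutative ring, with the field structure only needed at the final ``product of nonzero elements is nonzero'' step, is a point the citation-only treatment leaves implicit, and it correctly explains why the distinctness hypothesis is also necessary (two equal $a_\ell$ repeat a row and kill the determinant). One contextual remark worth keeping in mind, though it is outside the theorem as stated: for the RS construction to supply $n$ distinct nodes at all, the field must satisfy $q\ge n$, which is why RS codes operate over comparatively large fields.
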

\begin{proof}
A proof of this can be found in~\cite[pp. 17-18]{Mirsky55}.
\end{proof}

The main shortcoming of the RS codes is that decoding of RS codes requires the computationally expensive Gaussian elimination method, therefore limiting the size of the input packets. Since the RS codes use dense coding coefficients, computationally efficient algorithms to inverse the matrix such as structured Gaussian elimination and Wiedemann algorithms are of no help, as these algorithms run on sparse matrices.

Despite such shortcoming, RS codes still continues to be used for various applications such as redundant arrays of inexpensive disks (RAID) system~\cite{Plank97}, Microsoft Windows Azure cloud Storage~\cite{Calder11} and as part of the Microsoft Real-Time Streaming Protocol~\cite{RTSP12}, amongst others. 

\subsection{Low Density Parity Check Codes}
The low density parity check (LDPC) codes as the name implies are very sparse\footnote{The sparsity of a coding vector is denoted by $\omega$. A $\omega$-sparse coding vector is one with $k-\omega$ zero coefficients and $\omega$ non-zero coefficients.} random $\mathbb{F}_2$ linear codes, i.e. $\omega\ll k$, and the coding vector is selected randomly over $\mathbb{F}_2$. The LDPC codes were invented by Galleger in 1962. The importance of these codes has largely remained dormant until the 1990s when Fountain codes were designed using irregular LDPC codes. 

There are several decoding algorithms for LDPC, each for different channel model. For BEC, the message passing algorithm is used. The message passing algorithm is analogous to the back-substitution algorithm. A pseudocode of the message passing algorithm is given in~\cite[pp. 52]{Johnson10}.

\subsection{Random Linear Codes}
Random Linear (RL) codes, also known as Random Fountain codes and Random Linear Network Codes (RLNC), use coding coefficient $\textbf{g}_k$ randomly selected from $\mathbb{F}_q$. When the finite field from which the coding coefficients are randomly selected is large, then any $k$ received coded packets are linearly independent with respect to each other with high probability. Analytical results on the performance of RL codes for different finite field size is given in~\cite{Cruces11}.

The major drawback of RL codes is that it uses the computationally expensive Gaussian elimination method for decoding, which limits its implementation for larger input packet length~\cite{Shojania09, Vingelmann09}. On the positive aspect, because of the random encoding procedure, RL code is in particular popular in network coding~\cite{Yeung05}, as it can be implemented in a decentralized network system because of its random encoding process.

\section{Fountain Codes}\label{sec:fountain}

We first give a formal definition of Fountain codes, adopted from~\cite{Raptor}. An erasure code can be classified as a Fountain code, if it has the following traits.

\begin{itemize}
\item The number of coded packets which can be generated from a given set of input packets should be sufficiently large.
\item Irrespective of which packets the client has received, the client should be able to decode the $k$ input packets using any $k$ linearly independent received packets.
\item The encoding and decoding computation cost should be linear.
\end{itemize}

\noindent We now visit the development of various Fountain coding schemes.

\subsection{Tornado Codes}
Spurred by the popularity of the INTERNET in the 1990s and high decoding cost of RS codes, Luby \textit{et al.} developed the Tornado codes in 1996~\cite{Luby97,Robinson02}. Tornado codes have linear decoding cost of $n\log (\frac{1}{\epsilon})$, and transmission rate of 1.06. Software-based implementations of Tornado codes were shown to be about 100 times faster on small input packet size and about 10,000 times faster on larger input packet size than other software based implementation of RS codes~\cite{Byers98}.

Tornado codes are constructed using concatenation at different level. At the first layer, the input packets are XOR coded into coded packets of smaller length, these coded packets are then in turn XOR coded into coded packets at the second layer and so on. The last layer is coded using conventional code such as RS codes. Tornado codes are patented~\cite{Tornado0, Tornado1}.

Unfortunately Tornado codes are \textit{fixed-rate} codes, i.e. once the encoder has chosen $k$ and $n$ based on initial channel erasure rate estimation, the encoder can generate only $n$ codewords. Therefore if the average channel erasure rate is less than what the encoder initially estimated, then this would lead to redundant codewords at the decoder, whereas if the average erasure rate is higher than what was initially estimated, then that would lead the decoder unable to decode due to insufficient codewords. However for most internet applications and the wireless channel, the erasure rate has been shown to be stochastic in nature~\cite{Aguayo04}. This motivates the
design of rateless codes with linear decoding cost.

\subsection{Luby-Transform and Raptor Codes}

Armed with Tornado codes, Luby along with Goldin founded the Digital Fountain company to develop efficient erasure codes in 1998, drawing capital investment from Adobe, Cisco Systems and Sony Corporation~\cite{Robinson02}. The company and the codes the company developed are known as ``Fountain,'' based on what the codes achieve, generate virtually unlimited supply of codewords, analogous to a Fountain producing limitless drops of water. Just as an arbitrary collection of water drops will fill a glass of water and quench thirst, irrespective of which water drops had been collected. Collection of any $n$ Fountain codewords will be sufficient for the decoder to decode the input packets.

The main idea behind Luby-Transform (LT) codes and Raptor codes is to design the \textit{degree distribution} (such as the robust Soliton distribution, or one of its variant) of a coded packet. The \textit{degree} of a coded packet, indicates the number of input packets used to generate the coded packet. LT codes and Raptor codes are irregular codes. LT coding is done in two steps, first the encoder randomly selects the degree, whose expected probability is dictated by the degree distribution. In the next step the encoder, randomly selects input packets, the number of input packets selected is given by the degree selected in the first step, and perform XOR addition of those input packets.

Decoding is performed using back-substitution. The decoder looks for those coded packet with one unknown input packets, and decode the unknown packet. It then substitutes this decoded packets in all the other coded packets which had been generated using this decoded packet as one of its constituent encoding packet. The decoder continues to repeat this process of decoding and substitution until it has not decoded all the $k$ input packets. If it is unable to decode $k$ input packets, then it requests for additional coded packets to be transmitted by the server.

Raptor (Rapid Tornado) codes is a special class of LT codes. The design of Raptor codes is motivated by the fact that due to the random nature of selecting the input packets, there is always a non-zero probability that some of the input packets may never be selected for coding in LT codes. To address this problem, in Raptor code the input symbols are first \textit{precoded}, and then LT coding procedure takes place. Precoding can be thought of concatenation of input packets $(c_1, c_2, \ldots, c_k)$ and redundant packets $(y_1, y_2, \ldots, y_j)$, given as $(c_1, \ldots, c_k,y_1,\ldots, y_j)$. The redundant packets can be generated by randomly coding the input packets using XOR addition. After the precoding the output packets are generated using LT coding procedure, whose input packets are given by the concatenation $(c_1, \ldots, c_k,y_1,\ldots, y_j)$.

Decoding Raptor codes is done using inactivation decoding. While a detailed illustrating example of decoding Raptor code is given in~\cite[250-255]{Raptor}, we here give the main idea behind inactivation using simpler example. Consider the coding coefficient matrix for the Raptor codes in Figure~\ref{fig:inactivation}. Instead of performing Gaussian elimination on the complete matrix, inactivation algorithm first attempts to make the matrix sparse by back-substituting 1-sparse rows and the resulting 1-sparse rows. When no more 1-sparse coded packets are present, the algorithm decodes the resulting coded packets represented by a smaller but denser submatrix. Since Gaussian elimination runs on a much smaller submatrix, the resulting overall decoding complexity can be reduced.

\begin{figure}
\begin{center}
\includegraphics[width = 0.5\textwidth]{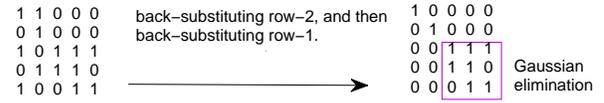}
\end{center}
\caption{An example to illustrate inactivation decoding.} \label{fig:inactivation}
\end{figure}

\subsection{LT and Raptor Codes Performance}

Luby-Transform (LT) codes~\cite{Luby02} and Raptor codes~\cite{Amin06}, along with Tornado codes, are classes of $\mathbb{F}_2$ linear codes broadly known as
\textit{Fountain codes}. Both LT and Raptor codes are rateless codes, with asymptotic decoding complexity of $\mathcal {O}(k\log(\frac{k}{\delta}))$ and $\mathcal {O}(k\log(\frac{1}{\epsilon}))$ respectively to deliver asymptotic optimality, where $1-\delta$ is the probability that the LT decoder can recover the input packets from $n$ codewords.

However for finite input packet length, particularly small values of $k$, LT and Raptor codes suffer from degrading transmission rate performance. Even for
large input packet length of $k\approx 10,000$, LT codes have an overhead of 5\%~\cite{Mackay05}, while for packet length of $k=65,536$, Raptor codes have an overhead of 3.8\%~\cite{Amin06}. Asymptotic rate optimality for LT and Raptor codes is therefore achieved only when the order of $k$ is in 100,000's of packets.

In addition, both LT codes and Raptor codes are not systematic codes, therefore limiting its application for scenarios such as the index coding problem~\cite{Chaudhry08} and cooperative data exchange problem~\cite{Tajbakhsh11}, where the decoder should be able to decode the input packets using a combination of subset of the input packets and coded packets. Similarly for Peer-to-Peer Content Distribution Network (P2P-CDN), a user may want to ``preview'' the content of the file before dedicating several hours to download the complete file, to verify that an incorrect file has not been uploaded by a dishonest user. The use of systematic codes is also a requirement in data storage, where the storage device should be able to recover an erased input packet, by using input packets and coded packet (also known as parity packets).

Hyytia \textit{et al.} studied optimal degree distribution for LT over small input packet size of $k\leq 30$~\cite{Hyttia07}. The results concluded that even with optimized degree distribution, LT codes have a redundancy of approximately 40\% for $k\leq 20$. These optimized degree distribution were calculated using a recursive equation, with exponential running time, making it computationally impractical to design optimal degree distribution for $k>20$.

Patent description of LT codes are given in~\cite{LT0,LT1}, while those of Raptor codes are given in~\cite{Raptor0, Raptor1}.

\subsection{Standardized Raptor Codes}
Digital Fountain was later acquired by Qualcomm in 2009. During this time period, Digital Fountain went on to standardize the Raptor codes. These standardized and patented versions of Raptor codes are known as the Raptor 10 (R10) and Raptor Q (RQ) codes~\cite[Chapter 3]{Raptor}~\cite{RQcodes}, and are systematic rateless codes designed to provide near-optimal transmission rates for finite length input packets.

The R10 codes generates some coded packets similar to RL codes over $\mathbb{F}_2$, which are relatively denser. The remaining coded packets are sparsely coded based on the specified precoding and LT coding procedure. The performance of R10 codes can be approximated and upper bounded by the performance of RL codes over $\mathbb{F}_2$. Analytical results of RL codes over $\mathbb{F}_2$ are given in~\cite{Cruces11}, while the performances of sparse RL codes over $\mathbb{F}_2$ are reported in~\cite{Studholme10}. The RQ codes differs from R10, in that instead of generating some coded packets using RL codes over $\mathbb{F}_2$, it uses a larger field size of $\mathbb{F}_{256}$. Similarly the performance of R10 codes are bounded by the performance of RL codes over $\mathbb{F}_{256}$. Performance evaluation of R10 and RQ codes can be found in~\cite[Chapter 3]{Raptor}.

Such improvements in performance come at the tradeoff cost of using the inactivation decoding algorithm which uses combination of Gaussian elimination and belief-propagation decoding algorithm to decode the codewords. The decoding complexity of R10 and RQ codes is given as $\mathcal {O}(k^{1.5})$~\cite[pp. 254-255]{Raptor}~\cite[pp. 8]{RQcodes}. Practical implementation of R10 codes has shown that Gaussian elimination can account for up to $92\%$ of the total decoding time even for a modestly large value of $k=1024$, where each packet is 4 bytes long~\cite{Todor11}.

Even with the decoding complexity of $\mathcal {O}(k^{1.5})$, R10 codes can only support up to 8,192 input packets, while RQ codes can support up to 56,403 input packets. This limitation comes from the design of degree distribution of the codewords for R10 and RQ codes. Once the number of input packets exceeds these limits, the decoding failure probability gradually increases.

\section{Triangular Codes}\label{sec:triangular}
\begin{table}
\centering \caption{\label{table:example1} Packet reception status example.}
\begin{tabular}{|c|c|c|c|}
\hline clients/packets & $c_1$ & $c_2$ & $c_1\oplus c_2$ \\
\hline $r_1$ & received & $\times$ & $\times$ \\
\hline $r_2$ & $\times$ & received & $\times$ \\
\hline $r_3$ & $\times$ & $\times$ & received \\
\hline 
\end{tabular}
\end{table}

A new class of nonlinear erasure codes, which can be in practice be encoded and decoded using XOR addition operation despite being nonlinear, called Triangular codes, was proposed recently to further improve the performance while maintaining low decoding complexity. This class of codes uses a mix of finite field and real field operations for encoding, and the encoded packets can be decoding using back-substitution method involving only XOR addition. The tradeoff cost of Triangular codes is the additional redundant bits needed to pad to each packet. However, it has been demonstrated that a small number of additional bits is sufficient to lead the codes to a near-optimal performance of transmission rate while maintaining low decoding complexity for finite length input packets. A proof-of-concept for Triangular codes was first presented in~\cite{Qureshi12}. A patent application for these codes has also been submitted~\cite{Triangular0}. Research and development for Triangular codes is currently an ongoing work.

\subsection{Motivating Example}
We illustrate the motivation of Triangular codes using an example. Consider for example, a server multicasting packet $c_1$ and $c_2$ to three clients. After the transmission on independent BEC, $c_1$ is received only by $r_1$, and $c_2$ is received only by $r_2$. Now the only linearly independent packet the server can transmit under the constraint of $\mathbb{F}_2$ linear coding is $c_1\oplus c_2$. Assuming that the transmitted $c_1\oplus c_2$ packet is received only by $r_3$ as shown in Table~\ref{table:example1}, then with such packet reception status, there is no way the server can generate a $\mathbb{F}_2$ linear code which is linearly independent for all the three clients.

To go around the problem depicted in Table~\ref{table:example1}, in our design, redundant `0' bits are selectively added at the head and tail of the input packets before performing XOR addition on packets. Assume that all packets $c_i$, $i\leq k$, are $B$-tuple, denoted as $c_i\triangleq (b_{i,1},b_{i,2},...,b_{i,B})$, $b_{i,j}\in\{0,1\}$. Bit $b_{i,j}$ is read as the $j^{th}$ bit of the $i^{th}$ packet. Hence the bit pattern of $c_1$ with one redundant `0' bit added at the head of the packet is given by the tuple $(0,b_{1,1},...,b_{1,B}),$ and that of $c_2$ as $(b_{2,1},...,b_{2,B},0)$. The packet header will include information about the number of redundant `0' bits added at the tail of each packet used to generate the coded packet.

The modified packets $c_1$ and $c_2$ can now represented as $c_{1,1}$ and $c_{2,0}$, where $c_{i,\upsilon}$ is read as the $i^{th}$ packet with $\upsilon$-0 bits added at its head. The transmitted packet $c_{1,1}\oplus c_{2,0}$ will be linearly independent for all the three clients. An illustration of the encoding $c_1\oplus c_2$ and $c_{1,1}\oplus c_{2,0}$, corresponding to $c_1\oplus 2\times c_2$ is shown in Figure~\ref{fig:code}. 

If $r_3$ receives $c_{1,1}\oplus c_{2,0}$, then using packet $c_1\oplus c_2$, it now has information about bit $b_{2,1}$, and $b_{1,1}\oplus b_{1,2}$. Using bit $b_{2,1}$ it can decode bit $b_{1,1}$ from $b_{1,1}\oplus b_{1,2}$. Bit $b_{1,1}$ is then substituted in $b_{1,1}\oplus b_{2,2}$, from the encoded packet $c_{1,1}\oplus c_{2,0}$, to obtain bit $b_{2,2}$, which can then be substituted in $b_{1,2}\oplus b_{2,2}$ to decode $b_{1,2}$. Therefore using this bit-by-bit simple back substitution method, $R_3$ can decode all the bits of packet $c_1$ and $c_2$. Clients $r_1$ and $r_2$ can also similarly decode the unknown packet from $c_{1,1}\oplus c_{2,0}$.

\begin{figure}
\begin{center}
\includegraphics[width = 0.5\textwidth]{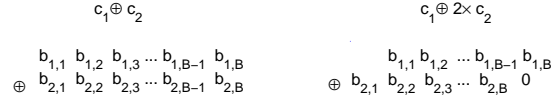}
\end{center}
\caption{An example to illustrate encoding process for $c_1\oplus c_2$ and $c_{1,1}\oplus c_{2,0}$. The presence or absence of the added bit `0' at the head of $c_{1,1}$ does not effect the actual encoded bitstream, as long as the right most bits of the two packets are aligned.} \label{fig:code}
\end{figure}

\subsection{Encoding and Decoding}
The implementation of adding redundant `0' bits can be described by the real field multiplication of the bitstream tuple with $2^\ell$, $\ell\in \mathbb{N}_0$, where $\ell$ denotes the number of redundant `0' bits added at the tail of the packet. After the multiplication, with the right most bits (tail) of the concerned input packets aligned, the packets are XOR added to generate a coded packet. To equalize the length of all packets for this finite field addition operation, redundant `0' bits are padded at the head of various packets, such that the total length of all packets used for encoding is equal. The encoded packet can be given as $c_1\oplus (2\times c_2)$, i.e. with coding coefficients given as $[1, 2]$. The encoding procedure for generating a coded packet $\kappa$ from the set of packets $\mathcal{P}$ using Triangular coding scheme can be given as,

\[
\kappa = \bigoplus_{c_i\in \mathcal{P}} 2^{\ell}\times c_i.
\]

The use of real field multiplication avoids the complicated finite field multiplication in the encoding. Real field multiplication can be easily accomplished by logical shift operation. While in the decoding, back-substitution can be used immediately on the set of encoded packets. The idea and design challenge of Triangular codes is to add redundant `0' bits to each packet such that there exists a row interchange permutation whereby these redundant `0' bits form a triangular pattern. If in addition to forming a triangular pattern, the coding vector also forms a full rank matrix, then the packets can be decoded using back-substitution only. Since in the coding coefficients, the unknown bits already form a Triangular matrix, the triangularization step of Gaussian elimination is no longer needed. The decoding complexity is therefore bounded as $\mathcal{O}(k^2)$ for each bit location using back-substitution. The complexity may be further reduced to $\mathcal{O}(k\omega)$, $\omega\leq k$, with a design of suitable sparse coding coefficients. As overhead of additional redundant bits are needed to pad to packets, the design challenge of Triangular codes is strike a balance between performance in retransmission rate and overhead for specific applications.

\section{Future Research Directions}\label{sec:future}
In this section we discuss few open problems where research on erasure coding is ongoing, namely, decoding delay of erasure codes, pollution attack, and the index coding problem.

While the use of erasure code improves the bandwidth performance of a broadcast network, it has a disadvantage of incurring a decoding delay. For example, for a client which has packet $c_1$ and wants packets $c_2$ and $c_3$, coded packets $c_2\oplus c_3$ and $c_1\oplus c_2$ are both linearly independent with respect to $c_1$, however only the latter coded packet can be instantly decoded by the client. When considering RS and RL codes, a client need to have $k$ linearly independent packets before it can start the decoding process. Designing an erasure coding scheme such that the time duration the client need to wait before it can decode the coded packet is an ongoing open research topic. We refer interested readers to~\cite{Keller08} and references therein.

Another problem related to security aspects of erasure codes is that of pollution attack. If the client admits even a single malicious coded packet from a malicious user, then during the decoding process, all the decoded packets will be corrupted, and hence result in turning the correctly used coded packet as being useless, if no appropriate security mechanism is in place. Devising algorithmic approach to prevent such pollution attack, and its corresponding computational complexity is also an ongoing research topic. We refer interested readers to~\cite{Yu09} and references therein.

An overview of the index coding problem had been presented earlier in Section~\ref{subsect:related}.

\section{Conclusion}\label{sec:conclusion}
Traditional approaches to deal with system erasure are to use retransmission and replication techniques, which limits the reliability of the system, and adversely effects the throughput performance of the system. In this paper, we presented motivation for the use of erasure coding to improve the reliability and performance of data transmission and data storage system. However the principle disadvantage of using traditional erasure coding such as the Reed-Solomon coding is that such codes have high decoding complexity limiting its implementation, especially on battery and processor constrained devices such as smartphones.

In the last two decades, motivated by the exponential increase in the data traffic over the Internet, a series of Fountain codes - Tornado codes, LT codes, and Raptor codes - have been proposed, and patented, to address the decoding complexity of RS codes. Unfortunately Fountain codes can achieve linear decoding complexity only when the input packet length is asymptotically large. For smaller input packet length, Fountain codes suffer from degrading transmission rate.

To address this tradeoff, recently the Triangular codes had been proposed. Triangular codes are non-linear codes, where the encoder add redundant bits at the head and tail of a packet before performing $\mathbb{F}_2$ addition of the packet. The main idea behind Triangular codes is to code the packets such that apart from being linearly independent with high probability, such coded packets can be decoded using the back-substitution decoding algorithm.

With the increasing use of mobile devices for Internet access, erasure codes will certainly continue to play an important roles wireless data transmissions. More research works are warranted to further reduce the transmission overhead while maintaining near-optimal coding performance.

\bibliographystyle{IEEEtran}
\bibliography{IEEEabrv,mainJ}

\end{document}